\newcommand{\snr}{{\sf snr}}
\newcommand{\Hm}{{\boldsymbol H}}
\newcommand{\Ym}{{\boldsymbol Y}}
\newcommand{\Xm}{{\boldsymbol X}}
\newcommand{\Zm}{{\boldsymbol Z}}
\newcommand{\sv}{{\boldsymbol s}}
\newcommand{\Ec}{{\mathcal E}}
\newcommand{\Xc}{{\mathcal X}}
\newcommand{\Yc}{{\mathcal Y}}
\newcommand{\CC}{\mathbb{C}}
\newcommand{\EE}{\mathbb{E}}
\newcommand{\RR}{\mathbb{R}}
\DeclareMathOperator{\trace}{{\rm tr}}
\newcommand{\figurename}{Figure}
\newcommand{\figref}[1]{\figurename~\ref{#1}}
\newcommand{\Rs}{R_{\rm s}}
\newcommand{\Rc}{R_{\rm c}}
\newcommand{\nt}{n_{\rm t}}
\newcommand{\nr}{n_{\rm r}}
\newcommand{\openone}{\leavevmode\hbox{\small1\normalsize\kern-.33em1}}
\newtheorem{theorem}{Theorem}
\newtheorem{remark}{Remark}
\title{Distortion Outage Probability in MIMO Block-Fading Channels}
\author{
\authorblockN{Li Peng}
\authorblockA{Department of Engineering\\ 
University of Cambridge\\
Cambridge CB2 1PZ, UK\\
\tt lp327@cam.ac.uk}
\and
\authorblockN{Albert Guill{\'e}n i F{\`a}bregas}
\authorblockA{Department of Engineering\\
University of Cambridge \\
Cambridge, CB2 1PZ, UK \\
\tt albert.guillen@eng.cam.ac.uk}
}
\begin{document}

\maketitle

%\footnotetext[1]{This work has been partly supported by the China Scholarship Council.}
%\setcounter{footnote}{1}
%\vspace{-18mm}
%%%%%%%%%%%%%%%%%%%%%%%%%%%%%%%%%
\begin{abstract}
We study analogue source transmission over MIMO block-fading channels with receiver-only channel state information. Unlike previous work which considers the end-to-end expected distortion as a figure of merit, we study the distortion outage probability. We first consider the well known transmitter informed bound, which yields a benchmark lower bound to the distortion outage probability of any coding scheme. We next compare the results with source-channel separation. The key difference from the expected distortion approach is that if the channel code rate is chosen appropriately, source-channel separation can not only achieve the same diversity exponent, but also the same distortion outage probability as the transmitter informed lower bound. 
\end{abstract}
%%%%%%%%%%%%%%%%%%%%%%%%%%%%%%%%%

%\newpage

%%%%%%%%%%%%%%%%%%%%%%%%%%%%%%%%%
\section{Introduction}
\label{sec:Introduction}
%%%%%%%%%%%%%%%%%%%%%%%%%%%%%%%%%
The block-fading channel was introduced in \cite{ozarow1994} in order to model delay-limited transmission over slowly varying wireless communications channels. In the channel, each codeword spans only a finite and fixed number $N$ of independent fading blocks. Practical scenarios include OFDM and frequency hopping for low-mobility wireless scenarios. Under this setup, it follows that the Shannon capacity of this channel is zero since there is an irreducible probability that a given transmission rate $\Rc$ is not supported by a particular channel realisation \cite{ozarow1994,biglieri1998}. In particular, a communication outage occurs whenever the instantaneous mutual information is less than the target data rate we wish to communicate at \cite{ozarow1994,biglieri1998}. As shown in \cite{malkamaki1999cdb}, the outage probability is the natural fundamental limit of the channel. 
%Multiple antennas at the transceivers are a key technology that can greatly improve the reliability and data rates over wireless channels \cite{foschini1998lwc,telatar1999cma}. In particular, multiple antennas are known to reduce the outage probability in block-fading channels. 
An important reliability metric over block-fading channels is the SNR exponent or outage diversity, defined as the high-SNR slope of the outage probability in a log-log scale.

Inspired by the work by Laneman {\em et al.} \cite{laneman2005}, the end-to-end expected distortion has been studied to characterise the performance of continuous or analogue source transmission over outage-limited multiple-antenna fading channels \cite{caire2007,deniz2007,deniz2008,tim2008}. The above works consider the SNR exponent of the end-to-end expected distortion (where the expectation is also taken over the fading) as a performance metric for a number of joint source-channel coding schemes. In particular, when the expected distortion is considered, these references illustrate the suboptimality of source-channel separation. In order to improve the performance, a number of joint-source channel schemes based on hybrid analogue-digital or multi-layered coding have been proposed \cite{caire2007,deniz2007,deniz2008,tim2008}. 

The expected distortion is the natural performance metric for {\em ergodic} fading channels, or channels with no stringent delay constraints. However, for outage-limited channels or channels with stringent delay constraints, the expected distortion fails to characterise the true end-to-end performance of such wireless systems, and thus it might not be the appropriate performance metric. In this paper, we take a different approach to the same problem. In particular, our contribution is the study of the distortion outage probability instead of the expected distortion as a figure of merit for system performance. The distortion outage probability is defined as the probability that the instantaneous distortion (a random variable that depends on the channel realisation and SNR) is larger than a target quality-of-service (QoS) distortion, i.e., 
\begin{equation}
\label{eq:eq44}
P_{\rm out}( \snr,\bar{D})=\Pr\left\{D(\Hm, \snr)>\bar{D}\right\}
\end{equation}
where $D(\Hm,\snr)$ is the instantaneous distortion achieved at SNR $\snr$ for a given $\nt\times \nr$ multiple-input multiple-output (MIMO) channel realisation $\Hm$, and $\bar{D}$ is the target distortion level characterizing the acceptable QoS of the systems. We are particularly interested in the distortion-outage SNR exponents.
%\begin{equation}
%d_{\rm out} = \lim_{ \snr\to\infty} -\frac{\log P_{\rm out}( \snr,\bar{D})}{\log \snr}.
%\label{eq:eq73}
%\end{equation}

In this work, similarly to previous works \cite{laneman2005,caire2007,deniz2007,deniz2008,tim2008} we first study a lower bound on the distortion outage performance, i.e., the transmitter informed bound. This bound assumes perfect channel state information at the transmitter (CSIT), and allows to adapt the joint source-channel code to the instantaneous channel conditions. In particular, we find out the relationships between this lower bound on the distortion outage probability and the information outage probability \cite{ozarow1994,biglieri1998}, as well as the corresponding SNR exponents.

%We show that the SNR exponent of the transmitter informed bound is
%\begin{equation}
%d_{\rm out}^{\rm tx}(\Rc(\Hm),b,\bar{D}) = \begin{cases}N\nt \nr&\ \  \text{Gaussian channel inputs} \\
%\nr\left(1+N\left\lfloor \nt-\frac{\Rc}{M}\right\rfloor\right)&\ \ \text{Discrete channel inputs}\end{cases}
%\label{eq:tx_bound_exp_intro}
%\end{equation}
%where $\nt$ is the number of transmit antenna, $\nr$ is the number of receive antenna, N is the number of fading blocks, $\Rc$ is the channel code rate and the size of the constellation $\mathcal{X}$ is $2^M$. 

We next consider source-channel separation \cite{cover}, and show that the separation scheme achieves the same SNR exponent as the transmitter informed bound. We also show that, when the channel coding rate is chosen appropriately, then the separation scheme yields the same distortion outage probability (not only the exponent) as the transmitter informed lower bound. This result, rather surprising a priori --separation is known to be largely suboptimal when the expected distortion SNR exponent is considered \cite{laneman2005,caire2007,deniz2007,deniz2008,tim2008}-- shows that separation can be optimal, when the distortion outage probability is used as a figure of merit for system design.
%%%%%%%%%%%%%%%%%%%%%%%%%%%%%%%%%%%%%%%%%%%%%%%%%%%%%%%%%%%%%%%%%%%%%%%%%%%%%%%%%%%%%%%%%

%%%%%%%%%%%%%%%%%%%%%%%%%%%%%%%%%%%%%%%%
\section{System Model}
\label{sec:SystemModel}
%%%%%%%%%%%%%%%%%%%%%%%%%%%%%%%%%%%%%%%%
%In this section we describe the MIMO block-fading channel model and joint source-channel code under consideration.

%%%%%%%%%%%%%%%%%%%%%%%%%%%%%%%%%%%%%%%%%%
\subsection{Channel Model}
\label{sec:ChannelModel}
We consider a MIMO block-fading channel model with $N$ fading blocks, $\nt$ transmit and $\nr$ receive antennas, and block  length $L$. The channel model is expressed as
\begin{equation}
\Ym_i=\sqrt{\frac{\snr}{\nt}}\Hm_i\Xm_i+\Zm_i,~~ i=1,\dotsc,N
\label{eq:eq3}
\end{equation}
where $\Hm_i\in\mathbb{C}^{\nr\times \nt}$, $\Xm_i\in\mathbb{C}^{\nt\times L}$, $\Ym_i\in\mathbb{C}^{\nr\times L}$, and $\Zm_i\in \CC^{\nr\times L}$ are the channel matrix, transmitted, received and AWGN signals corresponding to block $i$. We assume that the entries of $\Hm_i$ and $\Zm_i$ are independently circularly symmetric complex Gaussian with zero mean and unit variance $\sim\mathcal{N}_{\mathbb{C}}(0,1)$. We define the space-time codewords as $\Xm=\left[\Xm_1,\ldots,\Xm_N\right]$, and we assume they are normalized in energy, i.e., satisfying $\frac{1}{\nt NL}\trace\left(\EE\left[\Xm^H\Xm\right]\right)\leq 1$. The input and output alphabets are denoted by $\Xc^{\nt}$ and $\Yc^{\nr}$, respectively. We consider both random codes constructed using Gaussian and discrete channel inputs (PSK, QAM). For discrete channel inputs we define $m=\log_2|\Xc|$.

We define $\Hm=\text{diag}(\Hm_1,\ldots,\Hm_N)$, assumed to be known perfectly to the receiver. For simplicity we assume that the entries of $\Hm_i$ are i.i.d. $\sim\mathcal{N}_{\mathbb{C}}(0,1)$ (Rayleigh fading), so that $\frac{1}{\nt \nr}\trace\left(\EE\left[\Hm_i^H\Hm_i\right]\right)\leq 1$ and the average SNR per receive antenna is $\snr$. We assume that the transmitter knows the statistics of the channel, but not the channel realisation. Let 
\begin{align}
&I_{\Hm}(\snr)%=\mathbb{E}_{\Xm,\Ym|\Hm}\left[\log_2\frac{P_{\Ym|\Xm,\Hm}(\Ym| \Xm,\Hm)}{P_{\Ym|\Hm}(\Ym|\Hm)}\right]\\
&=\frac{1}{N}\sum_{i=1}^N\mathbb{E}\left[\log_2\frac{P_{\Ym_i|\Xm_i,\Hm_i}(\Ym_i| \Xm_i,\Hm_i)}{P_{\Ym_i|\Hm_i}(\Ym_i|\Hm_i)}\biggl | \Hm_i\right]
\end{align}
denote the instantaneous mutual information of the channel, for a given channel realisation $\Hm$.
%%%%%%%%%%%%%%%%%%%%%%%%%%%%%%%%%%%%%%%%%%%%%%%%%%%%%%%%%%%%%%%%%%%%%%%%%%%%%%%%%%%%%%%%%

%%%%%%%%%%%%%%%%%%%%%%%%%%%%%%%%%%%%%%%%%%
\subsection{Joint Source-Channel Coding}
\label{sec:JointSourceChannelCode}
We consider transmission of analogue sources over the MIMO block-fading channel described in Section \ref{sec:ChannelModel}. Consider a real continuous source that outputs $K$-length vector $\sv\in\mathbb{R}^K$. A $K$-to-$(\nt\times NL)$ joint source-channel encoder is a mapping $\phi:\mathbb{R}^K\rightarrow\mathbb{C}^{\nt\times NL}$ that maps blocks of $K$ source symbols $\sv\in\mathbb{R}^K$ onto length $NL$ space-time channel codewords $\Xm=\left[\Xm_1,\ldots,\Xm_N\right]$. At the receiver end, the corresponding source-channel decoder is a mapping $\varphi:\mathbb{C}^{\nr\times NL}\rightarrow\mathbb{R}^K$ that, for every channel realisation $\Hm$, maps the channel output $\Ym=\left[\Ym_1,\ldots,\Ym_N\right]$ into $\hat{\sv}\in \CC^K$, a reconstruction of the block of source symbols. In order to make explicit the dependencies on $\snr$ and $\Hm$, we denote the reconstructed block of symbols $\hat{\sv}(\snr,\Hm)$. The bandwidth ratio of the code is defined as
\begin{equation}
b\triangleq\frac{NL}{K} ~~\text{channel uses per source symbol}
\end{equation}
The bandwidth ratio can also be expressed as $b=W_c/W_s$, where $W_s,W_c$ are the source and channel bandwidths, respectively. The source rate is denoted by $\Rs=1/W_s$ and the channel rate is denoted by $\Rc=1/W_c$. The average quadratic distortion for a fixed $\Hm$ is given by
\begin{equation}
D(\Hm,\snr)=\frac{1}{K}\mathbb{E}\left[\left|\sv-\hat{\sv}(\snr,\Hm)\right|^2 \bigl|\Hm\right]
\end{equation}
where expectation is with respect to $\sv$, $\hat{\sv}$ and the channel noise, but depends on $\snr$ and on the channel realisation $\Hm$.
Mirroring results from channel coding for block-fading channels \cite{ozarow1994,biglieri1998,malkamaki1999cdb}, we define distortion outage probability
\begin{equation}
P_{\rm out}( \snr,\bar{D})\triangleq\Pr\left\{D(\Hm, \snr)>\bar{D}\right\}.
\label{eq:pout_def}
\end{equation}
We consider a family of joint source-channel coding schemes $\left\{\mathcal{C}_b\right\}$ of bandwidth ratio $b$. The distortion outage probability SNR exponent of the family is defined as
\begin{equation}
d_{\rm out}^*(b,\bar{D})=\sup_{\mathcal{C}_b}\lim_{\snr\rightarrow\infty}\frac{-\log P_{\rm out}( \snr,\bar{D})}{\log\snr}.
\end{equation}
in the forthcoming sections we study the distortion outage probability and the corresponding SNR exponents, and we compare them to those obtained using the expected distortion as a figure-of-merit.

%%%%%%%%%%%%%%%%%%%%%%%%%%%%%%%%%%%%%%%%%%%%%%%%%%%%%%%%%%%%%%%%%%%%%%%%%

%%%%%%%%%%%%%%%%%%%%%%%%%%%%%%%%%%%%%%%%%%
\section{Informed Transmitter}
\label{sec:InformedTransmitter}
%%%%%%%%%%%%%%%%%%%%%%%%%%%%%%%%%%%%%%%%%%
We now study the distortion outage probability exponent for the transmitter informed bound which assumes availability of channel state information at the transmitter (CSIT). Hence, the transmitter can choose the coding rate $\Rc(\Hm)$ equal to the instantaneous mutual information of the $N$-block MIMO fading channel, and the source rate $\Rs=\Rc(\Hm)b$. As shown in \cite{laneman2005,caire2007,deniz2007,deniz2008,tim2008}, this scheme is pointwise optimal for each $\Hm$, and its distortion outage probability is a lower bound on the minimum achievable distortion outage probability for any system of the original channel. In particular, the transmitter informed bound selects the channel coding rate $\Rc(\Hm)=I_{\Hm}(\snr)$. 
Then, the instantaneous end-to-end distortion for a given channel realisation with a Gaussian source of unit variance and an informed transmitter is
\begin{equation}
D(\Hm,\snr) = 2^{-2bI_{\Hm}(\snr)}.
\label{eq:eq8}
\end{equation}
Substituting Equation \eqref{eq:eq8} into Equation \eqref{eq:pout_def}, we can write the transmitter informed bound on the distortion outage probability as
\begin{align}
P_{\rm out}(\snr,\bar{D})%&=\Pr\left\{D(\Hm,\snr)>\bar{D}\right\}\nonumber\\
%&=\Pr\left\{2^{-2b I_{\Hm}(\snr)}>\bar{D}\right\}\nonumber\\
%&=\Pr\left\{-\frac{2b}{N}\inf_{\mathbf{Q}_n\succ0,\ \sum_{n=1}^N\textbf{tr}(\mathbf{Q}_n)\leq N\nt}\sum_{n=1}^N\log_2\det\left(\mathbf{I}+\frac{\snr}{\nt}\Hm_n\mathbf{Q}_n\Hm_n^H\right)>\log_2\bar{D}\right\}\nonumber\\
&=\Pr\left\{I_{\Hm}(\snr)<-\frac{\log_2\bar{D}}{2b}\right\}
\label{eq:eq47}
\end{align}
which shows that the transmitter informed bound on the distortion outage probability can be written as the information outage probability $\Pr\{I_\Hm(\snr)<R\}$ \cite{ozarow1994,biglieri1998} evaluated at target rate $R=-\log_2(\bar{D})/2b$.
We next examine the behavior of the SNR exponent. Following closely the arguments in \cite{lizhong2003}, we have the following result.
\begin{theorem}
\label{thm:thm1}
The SNR exponents of the transmitter informed lower bound for any fixed bandwidth ratio $b>0$, any fixed target distortion level $0\leq \bar{D}\leq 1$ are given by
\begin{equation}
d_{\rm out}^{\rm tx}(b,\bar{D}) = \nt\nr N
\label{eq:tx_bound_exp_infor_gi}
\end{equation}
for Gaussian channel inputs, while for discrete channel inputs
\begin{equation}
d_{\rm out}^{\rm tx}(b,\bar{D})=
\nr\left(1+\left\lfloor N\left(\nt-\frac{\Rs(\bar{D})}{b\,m}\right)\right\rfloor\right)
\label{eq:tx_bound_exp_infor_di}
\end{equation}
where $\Rs(\bar{D})\triangleq-\frac{\log_2(\bar{D})}{2}$ is the rate-distortion region of the source evaluated at $\bar D$.
\end{theorem}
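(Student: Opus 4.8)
The plan is to use the characterization in Equation~\eqref{eq:eq47}, which reduces the distortion outage probability to the information outage probability $\Pr\{I_\Hm(\snr) < R\}$ at target rate $R = \Rs(\bar D)/b = -\log_2(\bar D)/(2b)$. Since $\bar D$ and $b$ are fixed, this target rate is a fixed constant independent of $\snr$, so the problem collapses to computing the SNR exponent of the $N$-block MIMO information outage probability at a fixed rate. This is precisely the setting of the diversity-multiplexing tradeoff with multiplexing gain zero; I would invoke the analysis of \cite{lizhong2003} directly.

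For the Gaussian-input case, I would recall that the instantaneous mutual information is $I_\Hm(\snr) = \frac{1}{N}\sum_{i=1}^N \log_2\det\left(\openone + \frac{\snr}{\nt}\Hm_i\Hm_i^H\right)$, and that the outage event $\{I_\Hm(\snr) < R\}$ for fixed $R$ forces the sum of singular-value terms to stay bounded while $\snr \to \infty$. Parametrizing the ordered eigenvalues of each $\Hm_i\Hm_i^H$ as $\snr^{-\alpha_{ij}}$ and applying the Laplace-principle / change-of-variables argument of \cite{lizhong2003}, the exponent of the outage probability is obtained by minimizing $\sum_{i,j}(2j-1+|\nt-\nr|)\alpha_{ij}$ over the polytope $\{\alpha_{ij}\geq 0\}$ subject to the constraint that keeps the mutual information below a fixed rate. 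Because the target rate is fixed (multiplexing gain zero), the binding constraint is satisfied with $\alpha_{ij}>0$ arbitrarily small contributions required only infinitesimally, so the minimizing configuration drives all $\alpha_{ij}$ toward their boundary and the cost is minimized by taking all $\alpha_{ij} \to 0$ except on a vanishing set — yielding the full diversity $\nt\nr N$. I would make this precise by noting that for any fixed $R$ and any $\epsilon>0$, for $\snr$ large enough the event $\{I_\Hm(\snr)<R\}$ is contained in $\{\text{all } \|\Hm_i\|^2 \text{ small}\}$, whose probability decays as $\snr^{-\nt\nr N}$, and conversely this event is a lower bound on the outage probability up to constants; hence the exponent is exactly $\nt\nr N$.

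For the discrete-input case, the key difference is that the mutual information $I_\Hm(\snr)$ is bounded above by $\nt m$ per block regardless of the channel, so $I_\Hm(\snr) \leq N \nt m$ cannot be made arbitrarily large, and the high-SNR behavior of each block's mutual information saturates: conditioned on block $i$ not being in deep fade, $\frac{1}{N}\log_2$-type term contributes its full $\nt m$ bits. I would use the established result (from the block-fading literature with discrete inputs, e.g.\ the Singleton-type bound arguments) that the information outage probability at fixed rate $R$ has SNR exponent $\nr\left(1 + \lfloor N(\nt - R/(Nm)\cdot N/N)\rfloor\right)$; more precisely, writing $R = \Rs(\bar D)/b$, the number of blocks that may be "erased" (in deep fade) while still supporting rate $R$ is $\lfloor N(\nt - R/(b m)\cdot b/1)\rfloor$ — I would substitute $R = \Rs(\bar D)/b$ to get the fraction $\Rs(\bar D)/(b m)$ of transmit-antenna-block resources consumed, leaving $N\nt - \Rs(\bar D)/(bm)\cdot(\text{normalization})$ usable, and the exponent follows by noting each fully-faded block contributes $\nr$ to the diversity and one can afford one extra partially-faded block, giving $\nr(1 + \lfloor N(\nt - \Rs(\bar D)/(bm)) \rfloor)$.

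The main obstacle I anticipate is the discrete-input case: unlike the Gaussian case, there is no clean closed form for $I_\Hm(\snr)$, so the outage exponent must be derived from the high-SNR asymptotics of the mutual information of a discrete-input MIMO channel, showing that a block in non-deep-fade delivers essentially its full $\nt m$ bits while a block in deep fade (all eigenvalues scaling as $\snr^{-1}$ or faster) delivers essentially nothing, and that "partial" fading of one block suffices to bridge the gap — this requires carefully the known bounds on discrete-input mutual information and the counting argument for how many blocks can be sacrificed, which is exactly the Singleton-bound-flavored computation from \cite{lizhong2003} and the block-fading literature. The Gaussian case is comparatively routine given the diversity-multiplexing machinery. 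I would also need to check the edge behavior at $\bar D \to 0$ and $\bar D \to 1$ to confirm the floor function and the formula remain valid (e.g.\ at $\bar D = 1$, $\Rs = 0$, recovering exponent $\nr(1 + N\nt) $, consistent with the trivial zero-rate case), and verify that the stated formula indeed reduces to the Gaussian answer in spirit when $m \to \infty$.
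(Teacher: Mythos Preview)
Your approach is essentially the same as the paper's: reduce the distortion outage probability to the information outage probability at the fixed target rate $R=\Rs(\bar D)/b$ via Equation~\eqref{eq:eq47}, and then invoke the known SNR exponents of the information outage probability (full diversity $\nt\nr N$ for Gaussian inputs \cite{lizhong2003}, and the Singleton-type bound for discrete inputs \cite{NguyenThesis2009,nguyen2009}). The paper simply cites these results, whereas you sketch how to re-derive them; but the logical skeleton is identical.

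One correction to your Gaussian sketch: in the Zheng--Tse change of variables $\lambda_{ij}\doteq\snr^{-\alpha_{ij}}$, the fixed-rate outage event $\{I_\Hm(\snr)<R\}$ at high $\snr$ forces \emph{every} eigenvalue of every block to decay at least as fast as $\snr^{-1}$, i.e.\ $\alpha_{ij}\geq 1$ for all $i,j$. The minimizing configuration is therefore $\alpha_{ij}=1$ for all $i,j$, giving cost $N\sum_{j=1}^{\min(\nt,\nr)}(2j-1+|\nt-\nr|)=N\nt\nr$. Your statement that ``the cost is minimized by taking all $\alpha_{ij}\to 0$ except on a vanishing set'' is backwards (those $\alpha$'s correspond to eigenvalues of order one, hence mutual information growing like $\log\snr$, which is \emph{not} in outage); your subsequent containment argument, however, is the correct intuition. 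Also, your edge-case check at $\bar D=1$ is off: there $R=0$ and $\Pr\{I_\Hm(\snr)<0\}=0$, so the exponent is formally $+\infty$ rather than $\nr(1+N\nt)$; the formulas are meant for $\bar D<1$ strictly.
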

\begin{proof}
The transmitter informed lower bound on the distortion outage probability can be written as the information outage probability \cite{ozarow1994,biglieri1998} evaluated at $R=-\frac{\log_2(\bar{D})}{2b}$ (see Eq. \eqref{eq:eq47}). For Gaussian inputs, the SNR exponent of the information outage probability is $\nt\nr N$ for $R>0$. Since $0\leq \bar{D}\leq 1$,   
$-\frac{\log_2(\bar{D})}{2b}$ is positive. Then,  the resulting SNR exponent is also $\nt\nr N$. For discrete inputs we have that the SNR exponent of the information outage probability is given by the Singleton bound $\nr\left(1+\left\lfloor N\left( \nt-\frac{R}{m}\right)\right\rfloor\right)$ for $0\leq R\leq m$ \cite{NguyenThesis2009,nguyen2009}. Then, the resulting SNR exponent with discrete inputs is given by $\nr\left(1+\left\lfloor N\left( \nt-\frac{R_{\rm s}(\bar{D})}{b\,m}\right)\right\rfloor\right)$ for $R_{\rm c}(\bar{D})=-\frac{\log_2(\bar{D})}{2b}\in\left[0,m\right]$. 
\end{proof}

It is important to note that, since the  transmitter informed lower bound on the distortion outage probability has the exponents given by Theorem \ref{thm:thm1}, the SNR exponents of any coding scheme will be upper bounded by Eqs. \eqref{eq:tx_bound_exp_infor_gi} and \eqref{eq:tx_bound_exp_infor_di}.

%The converse of the informed transmitter lower bound is easily proved by invoking Data Processing Inequality. It is obvious that the encoding and decoding process $\sv\rightarrow\Xm\rightarrow\Ym\rightarrow\hat{\sv}$ can be treated as a Markov Chain, by applying Data Processing Inequality, we have $I(\sv,\hat{\sv})\leq I(\Xm,\Ym)$. And also noticing $I(\sv,\hat{\sv})$ and $I(\Xm,\Ym)$ are in different units, bits/source symbol for $I(\sv,\hat{\sv})$ and bits/channel use for $I(\Xm,\Ym)$, and by invoking the rate-distortion function of Gaussian source,
%\begin{equation}
%\frac{1}{2}\log\frac{1}{D(\Hm,\snr)}\leq bI(\Xm,\Ym)
%\end{equation}
%Hence we can show any distortion outage probability lower than \eqref{eq:eq47} is not possible.

\begin{remark}[Diversity-Multiplexing Tradeoff]
\label{rem:dmt}
The results of Theorem \ref{thm:thm1} for Gaussian channel inputs can be generalized to a family of joint source and channel codes whose rate increases with $\snr$. In particular, letting $\Rs = r_{\rm s}\log\snr$ and $\Rc = r_{\rm c}\log\snr$ with $b=\frac{ r_{\rm s}}{ r_{\rm c}}$ results in a diversity-multiplexing tradeoff $d_{\rm out}^{\rm tx}(b,\bar{D},r_{\rm c})$ given by the piecewise linear function joining the points ($r_{\rm c},d(r_{\rm c})$)
\begin{equation}
d(r_{\rm c}) = N (\nt-r_{\rm c})(\nr -r_{\rm c})
\end{equation}
achieving the result of Theorem \ref{thm:thm1} for $r_{\rm c}=0$ \cite{lizhong2003}.
\end{remark}

\begin{remark}[Comparison with Expected Distortion]
The SNR exponent of the informed transmitter lower bound with Gaussian channel inputs when the expected distortion is used as performance metric is given by \cite{deniz2008},
\begin{equation}
d_{\rm exp}^{\rm tx}=N\sum_{i=1}^{\min(\nr,\nt)}\min\left\{\frac{2b}{N},2i-1+\left|\nt-\nr\right|\right\}
\end{equation}
In \figref{fig:comp1}, we illustrate the SNR exponents of the transmitter informed bound for Gaussian channel inputs, from both distortion-outage and expected distortion perspectives. We observe that the distortion-outage exponent is always larger (for small bandwidth ratios) or equal to the expected distortion exponent. In \figref{fig:comp2}, we compare the distortion-outage SNR exponents of Gaussian random codes with that of discrete inputs. We observe that full diversity ($N\nr\nt$) is achieved when the we have a large bandwidth ratio for all inputs, and that a larger constellation size results in a larger support with full diversity. Note that Singleton bound is valid when $R_c\leq m$, hence we can obtain a bound on $b$, which is $b\geq-\frac{\log_2\bar{D}}{2m}$. For $b$ smaller than this threshold, the exponent is zero.%{\bf ADD COMMENT HERE ON BOUND ON $b$. THE EXPONENT SHOULD BE ZERO, RIGHT? 	SAY THAT CONSTELLATIONS OF HIGHER-ORDER ALPHABETS ACHIEVE LARGER EXPONENTS FOR A LARGER RANGE OF $b$.}
%{\bf THE 2 FIGURES SHOULD HAVE Y AXIS UNTIL 35, IT WILL LOOK NICER.}
\begin{figure}
	\centering
		\includegraphics[width=1\columnwidth]{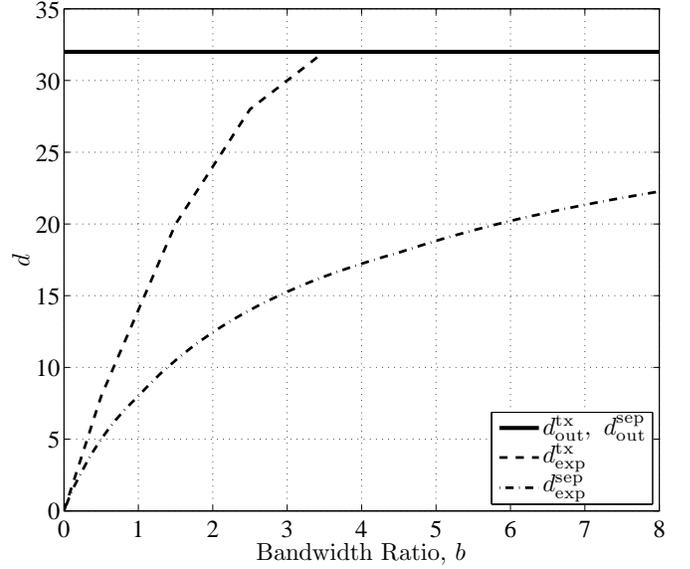}
	\caption{Distortion-outage vs. expected distortion SNR exponents in a $4\times 4$ MIMO block-fading channel with $N=2$, Gaussian inputs and $\bar{D}=0.05$. Solid lines correspond to the distortion-outage exponents, while dashed and dash dotted lines correspond to expected distortion exponents. %The expected distortion SNR exponent achieved by separation is also shown. The distortion-outage SNR exponent achieved by separation is  the same as that of the transmitter informed bound.
	}
	\label{fig:comp1}
 \vspace{-3mm}
 \end{figure}
 
 \begin{figure}
	\centering
		\includegraphics[width=1\columnwidth]{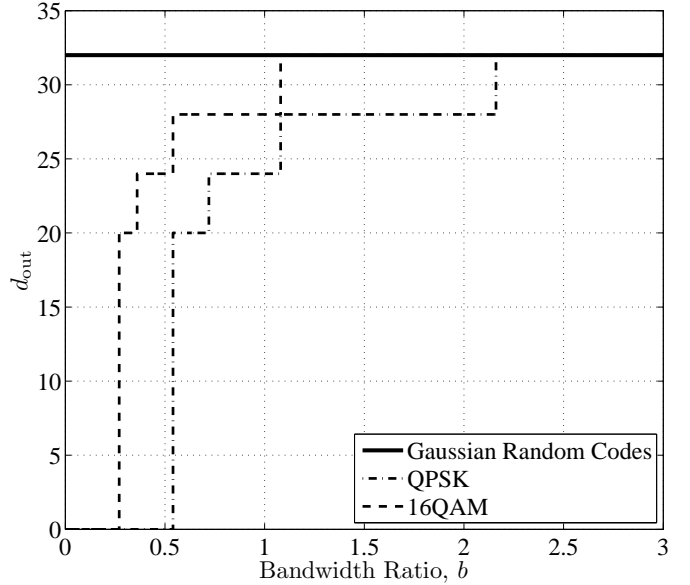}
	\caption{Distortion-outage SNR exponents in a $4\times 4$ MIMO block-fading channel with $N=2$, different channel inputs and $\bar{D}=0.05$. 
	%$b$ is not calculated from zero because $-\frac{\log_2(\bar{D})}{2b}\leq M$. Note that the distortion-outage SNR exponent achieved by separation is the same as that of the transmitter informed bound.
	}
	\label{fig:comp2}
 \vspace{-3mm}
 \end{figure}
\end{remark}
%%%%%%%%%%%%%%%%%%%%%%%%%%%%%%%%%%%%%%%%%%%%%%%%%%%%%%%%%%%%%%%%%%%%%%%

%%%%%%%%%%%%%%%%%%%%%%%%%%%%%%%%%%%%%%%%%
\section{Separation}
\label{sec:Seperation}
%%%%%%%%%%%%%%%%%%%%%%%%%%%%%%%%%%%%%%%%%
A source-channel coding separation scheme consists of the concatenation of a fixed-length block source encoder $\phi_{\rm s}:\RR^K\to \RR^{K}$, of rate $\Rs$ bits per source sample, with a space-time channel encoder $\phi_{\rm c}:\RR^{K}\to \CC^{\nt \times NL}$ of rate $\Rc$ bits per channel use. Source and space-time coding rates are related through the bandwidth ratio as $\Rs=\Rc b$. Let $D_{\rm s}(\Rs)$ denote the distortion-rate function of the source and $P_{\rm e}( \snr,\Hm)$ denote the error probability of the channel code for a particular $\snr$ and channel realisation $\Hm$. Following \cite{hochwald1997}, \cite[Ch. 7]{viterbi1979pdc}, we write the distortion achieved by the separation scheme for a fixed channel realisation $\Hm$ as
\begin{align}
&D_{\rm sep}(\Hm,\snr)\notag\\
 &= D(\snr\,|\,\Hm,\text{no channel error})\Pr\{\text{no channel error}\,|\,\Hm\}\nonumber\\
 &+ D(\snr\,|\,\text{channel error},\Hm)\Pr\{\text{channel error}\,|\,\Hm\}.
 \label{eq:sep_channel_form}
% & \leq D(\Hm,\snr\,|\,\text{no channel error}) \Pr\{\text{no channel error}\} + d_0 P_{\rm e}(\snr,\Hm) 
\end{align} 
Following \cite{hochwald1997}\cite[Ch. 7]{viterbi1979pdc}, we can upperbound \eqref{eq:sep_channel_form} as
\begin{align}
D_{\rm sep}(\Hm,\snr)
% &~\leq D(\Hm,\snr\,|\,\text{no channel error}) + d_0P_{\rm e}( \snr,\Hm)\\
&\leq D_{\rm s}(\Rs)+ d_0P_e( \snr,\Hm)
\label{eq:sep_dist_bound}
\end{align}
where $d_0$ is a bound to the mean MSE distortion and $d_0^2$ upperbounds its variance \cite[Sec. 7.5]{viterbi1979pdc}. 
%The full derivation of \eqref{eq:sep_dist_bound} for discrete memoryless sources is given in \cite[Theorem 7.3.1]{viterbi1979pdc}. 
%The proof for the Gaussian source is given in the Appendix. 
%The proof of \eqref{eq:sep_dist_bound} is omitted or the sake of space limitations. 
Since the channel realisation is unknown to the transmitter, the average distortion when there is no channel error is the distortion-rate function of the source code, and that it does not depend on $\Hm$ nor $\snr$.
Using Gallager's error exponents for channel coding, we further upperbound \eqref{eq:sep_dist_bound} as \cite{gallager1968ita}
\begin{equation}
D_{\rm sep}(\Hm,\snr)\leq D_{\rm s}(\Rs)+ d_0 2^{-nE_{\rm r}(\Rc,\Hm)}
\label{eq:d_exp}
\end{equation}
where $n=NL$ is the codeword length,
\begin{equation}
E_{\rm r}(\Rc,\Hm)=\sup_{0\leq\rho\leq1} E_0(\rho,\Hm)-\rho \Rc
\end{equation}
is the random coding error exponent, and $E_0(\rho,\Hm)$
%\begin{align}
%&E_0(\rho,\Hm)=\notag\\
%&\hspace{-2mm}-\log_2\mathbb{E}_{\Xm,\Ym|\Hm}\left[\left(\sum_{\xv'}P_{\Xm}(\xv')\left(\frac{P_{\Ym|\Xm,\Hm}(\Ym| \xv',\Hm)}{P_{\Ym|\Xm,\Hm}(\Ym| \Xm,\Hm)}\right)^{\frac{1}{1+\rho}}\right)^\rho\right]
%\end{align}
is the Gallager function \cite{gallager1968ita}. 
According to Gallager's noisy channel coding theorem, the random coding error exponent $E_{\rm r}(\Rc,\Hm)>0$ whenever $\Rc<I_{\Hm}(\snr)$ and $E_{\rm r}(\Rc,\Hm)=0$ when $\Rc\geq I_{\Hm}(\snr)$ \cite{gallager1968ita}. 

It is also clear from \eqref{eq:d_exp}, that any separation scheme will achieve a distortion that is upperbounded by
\begin{equation}
D_{\rm sep}(\Hm,\snr)< D_{\rm s}(\Rs)+ d_0.
\label{eq:d_upbnd}
\end{equation}
Furthermore, for large $n$ we have that
\begin{equation}
\lim_{n\rightarrow\infty} P_{\rm e}(\snr,\Hm)=\begin{cases}1 &\Rc\geq I_{\Hm}(\snr) \\
0 &\Rc<I_{\Hm}(\snr)\end{cases}
\end{equation}
Therefore, in the limit for large $n$, we obtain that the distortion obtained with separation can be upper bounded as
\begin{equation}
D_{\rm sep}(\Hm,\snr)\leq D_{\rm s}(\Rs)+ d_0\openone\left\{I_{\Hm}(\snr)\leq \Rc\right\}.
\label{eq:eq78}
\end{equation}
where $\openone\{\Ec\}$ is the indicator function of the event $\Ec$. 

The corresponding distortion outage probability is therefore simply bounded as 
\begin{align}
P_{\rm out}^{\rm sep}&(\snr,\bar{D})=\Pr\left\{D_{\rm sep}(\Hm,\snr)>\bar{D}\right\}\nonumber\\
&\leq \Pr\Bigl\{D_{\rm s}(b\Rc)+ d_0\openone{\left\{I_{\Hm}(\Xm;\Ym)\leq \Rc\right\}}>\bar{D}\Bigr\}.
\label{eq:eq79}
\end{align}

We have the following result.
\begin{theorem}
\label{thm:thm2}
The distortion outage SNR exponent of a tandem separation scheme is given by  
\begin{equation}
d_{\rm out}^{\rm sep}(\Rc,b,\bar{D})=N\nt\nr
\label{eq:eq_sep_exp_gi}
\end{equation} 
for Gaussian channel inputs, while for discrete channel inputs
\begin{equation}
d_{\rm out}^{\rm sep}(\Rc,b,\bar{D})=
\nr\left(1+\left\lfloor N\left(\nt-\frac{\Rc}{m}\right)\right\rfloor\right).
\label{eq:eq_sep_exp_di}
\end{equation} 
\end{theorem}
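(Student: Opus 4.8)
The plan is to reduce the distortion outage probability of the tandem scheme to an ordinary information outage probability at the channel rate $\Rc$, and then to reuse the exponents already invoked in the proof of Theorem~\ref{thm:thm1}. The starting point is the large-blocklength bound \eqref{eq:eq78}, $D_{\rm sep}(\Hm,\snr)\le D_{\rm s}(b\Rc)+d_0\openone\{I_{\Hm}(\snr)\le\Rc\}$, together with the observation that the channel rate can be chosen so that the source code \emph{alone} already meets the target quality of service, i.e., so that $D_{\rm s}(b\Rc)<\bar D$. Since $D_{\rm s}$ is continuous and strictly decreasing (for a unit-variance Gaussian source $D_{\rm s}(\Rs)=2^{-2\Rs}$), this holds whenever $\Rc>\Rs(\bar D)/b=-\log_2\bar D/(2b)$, which is the ``appropriate'' choice of channel rate. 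Under it a distortion outage can be triggered only by a channel decoding error, so $P_{\rm out}^{\rm sep}$ is essentially the information outage probability of the channel at rate $\Rc$.

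For the upper bound on $P_{\rm out}^{\rm sep}$ I would start from \eqref{eq:eq79}: once $D_{\rm s}(b\Rc)<\bar D$, the event $\{D_{\rm s}(b\Rc)+d_0\openone\{I_{\Hm}(\snr)\le\Rc\}>\bar D\}$ is contained in $\{I_{\Hm}(\snr)\le\Rc\}$, hence $P_{\rm out}^{\rm sep}(\snr,\bar D)\le\Pr\{I_{\Hm}(\snr)\le\Rc\}$. The SNR exponent of this information outage probability is exactly the one used in the proof of Theorem~\ref{thm:thm1}: $N\nt\nr$ for Gaussian inputs and any $\Rc>0$ \cite{lizhong2003}, and the Singleton-bound exponent $\nr(1+\lfloor N(\nt-\Rc/m)\rfloor)$ for discrete inputs \cite{NguyenThesis2009,nguyen2009}. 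This already gives $d_{\rm out}^{\rm sep}(\Rc,b,\bar D)\ge$ the right-hand sides of \eqref{eq:eq_sep_exp_gi}--\eqref{eq:eq_sep_exp_di}.

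For the matching lower bound I would argue that $P_{\rm out}^{\rm sep}$ decays no faster than $\Pr\{I_{\Hm}(\snr)\le\Rc\}$. Whenever $I_{\Hm}(\snr)<\Rc$ the channel decoding error probability satisfies $P_{\rm e}(\snr,\Hm)\to1$ as the blocklength grows, and upon a decoding error the source reconstruction is essentially uncorrelated with $\sv$, so the conditional MSE tends to a value of order the source variance, which exceeds $\bar D\le1$. Hence $P_{\rm out}^{\rm sep}(\snr,\bar D)\ge(1-\epsilon_n)\Pr\{I_{\Hm}(\snr)<\Rc\}$ with $\epsilon_n\to0$. Because $I_{\Hm}(\snr)$ has a continuous distribution, $\Pr\{I_{\Hm}(\snr)<\Rc\}$ and $\Pr\{I_{\Hm}(\snr)\le\Rc\}$ carry the same SNR exponent, so $P_{\rm out}^{\rm sep}$ is sandwiched between quantities with identical exponents and \eqref{eq:eq_sep_exp_gi}--\eqref{eq:eq_sep_exp_di} follow. (In the Gaussian case the converse can also be obtained directly from Theorem~\ref{thm:thm1}, since a separation scheme is a legitimate coding scheme and hence its exponent cannot exceed $N\nt\nr$.)

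The main obstacle is the converse step: one has to show carefully that a channel decoding error generically yields a distortion asymptotically no better than a blind estimate of the source, which needs a mild regularity assumption on the source quantizer (a distortion-rate-achieving code with bounded reconstruction, in the spirit of \cite[Sec.~7.5]{viterbi1979pdc}) and some care with the order of the $n\to\infty$ and $\snr\to\infty$ limits (and with the points where the Singleton exponent is discontinuous in $\Rc$). A secondary point, which is really about the operating regime rather than a genuine difficulty, is that \eqref{eq:eq_sep_exp_gi}--\eqref{eq:eq_sep_exp_di} are understood for $\Rc$ large enough that $D_{\rm s}(b\Rc)<\bar D$; for smaller $\Rc$ the QoS target is violated even over a noiseless channel and the exponent degenerates to zero.
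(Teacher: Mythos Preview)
Your argument is correct and follows essentially the same route as the paper: reduce the distortion-outage event, via the large-$n$ bound \eqref{eq:eq78} and the operating condition $D_{\rm s}(b\Rc)\le\bar D$, to the information outage event $\{I_{\Hm}(\snr)\le\Rc\}$, and then invoke the known outage exponents \cite{lizhong2003,NguyenThesis2009,nguyen2009}. Your treatment is in fact more careful than the paper's on two points: you make the operating-regime constraint on $\Rc$ explicit (the paper simply assumes \eqref{eq:eq_d_range}), and you supply a converse argument showing $P_{\rm out}^{\rm sep}$ cannot decay faster than $\Pr\{I_{\Hm}(\snr)<\Rc\}$, whereas the paper establishes only the achievability direction and leaves the matching upper bound on the exponent implicit.
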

\begin{proof}
From \eqref{eq:eq79} we have
\begin{align}
P_{\rm out}^{\rm sep}(\snr,\bar{D})&\leq\Pr\Bigl\{D_{\rm s}(b\Rc)+d_0\openone\left\{I_{\Hm}(\snr)\leq \Rc\right\}>\bar{D}\Bigr\}\nonumber\\
&=\Pr\left\{\openone\left\{I_{\Hm}(\snr)\leq \Rc\right\}>\frac{\bar{D}-D_{\rm s}(b\Rc)}{d_0}\right\}.
\label{eq:eq80}
\end{align}
We note that the quantity $\frac{\bar{D}-D_{\rm s}(b\Rc)}{d_0}\in[0,1)$, since
\begin{align}
D_{\rm s}(b\Rc)\leq \bar{D}< D_{\rm s}(b\Rc)+ d_0.
\label{eq:eq_d_range}
\end{align}
Then, since the  indicator function takes only the values 0 or 1, we rewrite \eqref{eq:eq80} as
\begin{align}
P_{\rm out}^{\rm sep}(\snr,\bar{D})\leq&\Pr\left\{\openone\left\{I_{\Hm}(\snr)\leq \Rc\right\}>\frac{\bar{D}-D_{\rm s}(b\Rc)}{d_0}\right\}\nonumber\\
&=\Pr\left\{I_{\Hm}(\snr)\leq \Rc\right\},
\label{eq:eq_sep_outage}
\end{align}
which is exactly the information outage probability of the MIMO block fading channel when the channel coding rate equals to $\Rc$. Hence, the result follows from \cite{lizhong2003,NguyenThesis2009,nguyen2009}.
\end{proof}

\begin{remark}
From the above result and Remark \ref{rem:dmt} it is clear that the same diversity-multiplexing tradeoff will be achieved in the case of separation as well.
\end{remark}

\indent From \eqref{eq:eq_d_range}, we find that for each $\bar{D}$, there is a range of values for coding rate $\Rc$ that we can use to achieve the exponents in \eqref{eq:eq_sep_exp_gi} and \eqref{eq:eq_sep_exp_di}.
\begin{align}
\bar{D}-d_0<&2^{-2b\Rc}\leq\bar{D}
\end{align}
which readily implies $\frac{1}{b}\Rs(\bar{D})\leq \Rc<\frac{1}{b}\Rs(\bar{D}-d_0)$,
where $\frac{1}{b}\Rs(\bar{D})=-\frac{\log_2\bar{D}}{2b}$ for the real Gaussian source.
Equation \eqref{eq:eq_sep_outage} implies that when $\Rc=\frac{1}{b}\Rs(\bar{D})$, the distortion outage probability for separation can be upper bounded by
\begin{equation}
P_{\rm out}^{\rm sep}(\snr, \bar{D})\leq\Pr\left\{I_{\Hm}(\snr)\leq-\frac{\log_2\bar{D}}{2b}\right\}
\end{equation}
which coincides with the transmitter informed bound, and hence achieves the minimum possible distortion outage probability making separation with this particular choice of the channel coding rate distortion-outage optimal.

\begin{remark}[Comparison with Expected Distortion]
The expected distortion exponent for separation scheme for $N$ block-fading channel is given by \cite{caire2007}, for $\frac{1}{b}\in\left[\frac{2(j-1)}{d^*(j-1)},\frac{2j}{d^*(j)}\right)$
\begin{align}
d_{\rm exp}^{\rm sep}=N\frac{2b(jd^*(j-1)-(j-1)d^*(j))}{2b+d^*(j-1)-d^*(j)}
\end{align}
for $j=1,\ldots,\min(\nr,\nt)$. Where $d^*(k)$ is the optimal tradeoff curve given by $d^*(k)=N(\nr-k)(\nt-k)$.
%{\bf WHAT IS $d^*$?????}
\end{remark}
We observe from \figref{fig:comp1} that the distortion-outage exponent of separation exhibits a large gain over its expected distortion exponent counterpart for all bandwidth ratios. Remark that Theorem 2 shows that the distortion outage exponents are equal to those of the transmitter informed bound. 

%furthermore, as it is shown in the next figure, separation not only achieves the optimal distortion-outage exponent, but achieves the actual transmitter informed bound.

\begin{figure}
	\centering
		\includegraphics[width=1\columnwidth]{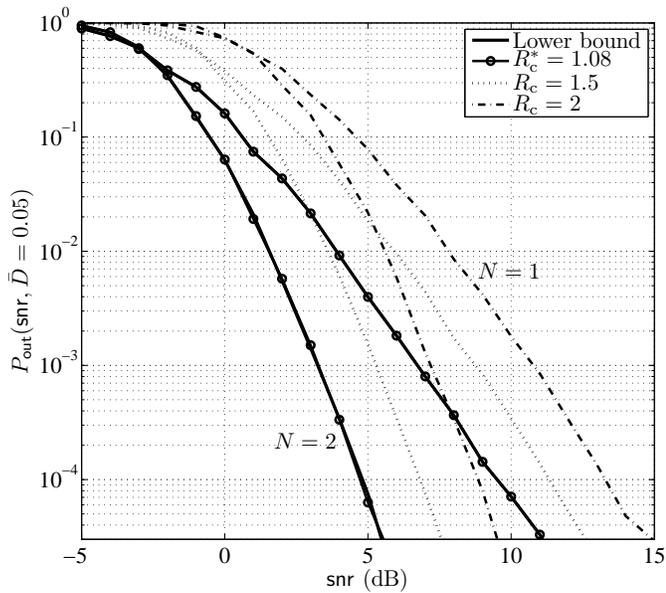}
	\caption{Transmitter informed lower bound and separation upper bound on the distortion outage probability with Gaussian source and channel inputs, $b=2$, $d_0=0.5$ in a $2\times2$ MIMO system. In this case, $\Rc^*=\frac{1}{b}\Rs(\bar{D})=1.08$.}
	\label{fig:mimo2by2_5}
 \vspace{-3mm}
 \end{figure}

\begin{figure}[htb]
 \centering
 \includegraphics[width=1\columnwidth]{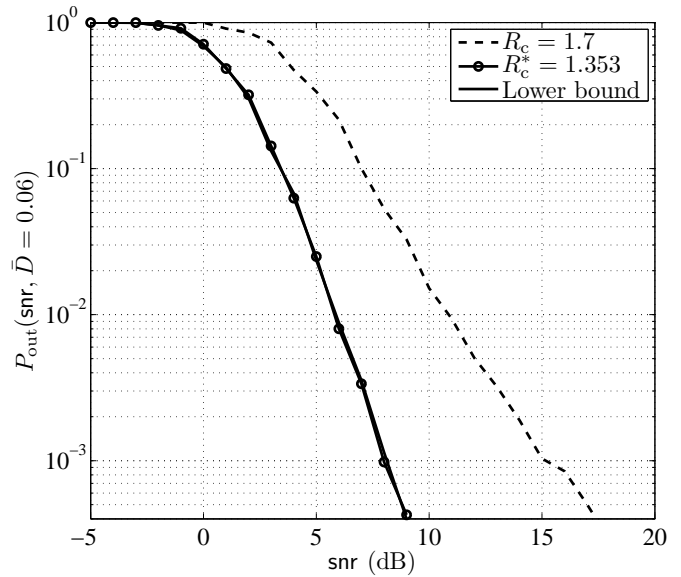}		
 \caption{Transmitter informed lower bound and separation upper bound on the distortion outage probability with Gaussian source and BPSK channel inputs in a $2\times2$ system with $N=2$, $b=1.5$ and $d_0=0.5$. In this case, $\Rc^*=\frac{1}{b} \Rs(\bar D) = 1.353$.  %{\bf the SNR in the ylabel needs to be in sf font.}
 }
 \label{fig:mimo2by2_6}
 \vspace{-3mm}
\end{figure}

As examples, we show in \figref{fig:mimo2by2_5} the distortion outage probability of a $2\times2$ MIMO block-fading channel with i.i.d. Rayleigh fading with $N=1,2$ and $\bar D = 0.05$, for Gaussian source and channel inputs. As predicted by our results, the transmitter informed lower bound for distortion outage probability using informed transmitter always has a slope that equals to $N\nt\nr$ and it is independent of channel coding rate $\Rc$, target distortion $\bar{D}$ and bandwidth ratio $b$. The figure validates our results that the SNR exponent of distortion outage probability of separation scheme is equal to $N\nt\nr$. We also note that when the channel coding rate is chosen to be $\Rc=\frac{1}{b}R_{\mathcal{Q}}(\bar{D})$, the resulting distortion outage probability upperbound matches the transmitter informed bound. We also have shown in \figref{fig:mimo2by2_6} the distortion outage probability of a $2\times2$ MIMO block-fading channel with $N=2$ for $\bar{D}=0.06$ with BPSK. We observe an exponent of 4 when $\Rc^*=1.353$, and 2 when $\Rc=1.7$, as predicted by the  Singleton bound. We remark that for high $\Rc$, there is a significant loss in distortion outage probability (not only in gain, but also in exponent) due to the Singleton bound.

%%%%%%%%%%%%%%%%%%%%%%%%%%%%%%%%%%%%%%%%%
\section{Conclusions}
\label{sec:Conclusions}
%%%%%%%%%%%%%%%%%%%%%%%%%%%%%%%%%%%%%%%%%%%

We have revisited analogue source transmission over MIMO block-fading channels and proposed the distortion outage probability as a new performance metric for system design. We have argued that the distortion outage probability is the natural performance metric for delay-limited channels. We have derived the SNR exponents for both Gaussian and coded modulation inputs. We have shown that the distortion-outage SNR exponents are always larger than the expected distortion exponents, in both, transmitter informed bound and separation.  We have furthermore shown that source-channel separation can not only achieve the SNR exponent of the transmitter informed bound, but also the actual distortion outage probability, when the channel coding rate is chosen appropriately.

\newpage
\bibliographystyle{IEEE}
\bibliography{outagedistortion}

\end{document}